\newif\ifEditMode
\tikzset{
->, 
node distance=3cm, 
every state/.style={thick, fill=gray!10}, 
initial text=$ $, 
}
\newtheorem{example}{Example}[section]
\newtheorem{proposition}{Proposition}[section]
\newtheorem{definition}{Definition}[section]
\newtheorem{corollary}{Corollary}[section]
\newtheorem{lemma}{Lemma}[section]
\newtheorem{theorem}{Theorem}[section]
\DeclareRobustCommand{\mybox}[2][gray!20]{%
\begin{tcolorbox}[   
        breakable,
        left=0pt,
        right=0pt,
        top=0pt,
        bottom=0pt,
        colback=#1,
        colframe=#1,
        width=\dimexpr\columnwidth\relax, 
        enlarge left by=0mm,
        boxsep=5pt,
        arc=0pt,outer arc=0pt,
        ]
        #2
\end{tcolorbox}
}
\newcommand{\setunion}[0]{\boldsymbol{\cup}}
\newcommand{\setArg}[2]{\left\{ {#1} \,\,\middle|\,\, {#2} \right\}}
\renewcommand{\paragraph}[1]{\noindent\textbf{#1}}
\newcommand{\reals}[0]{\mathbb{R}}
\newcommand{\transp}[0]{\intercal}
\newcommand{\coeff}[0]{\textproc{coeff}}
\newcommand{\sign}[0]{\textproc{sign}}
\newcommand{\length}[0]{\textproc{length}}
\newcommand{\diag}[0]{\textproc{diag}}
\newcommand{\dom}[0]{\mathrm{dom}}
\begin{document}
\title{Context-Aided Variable Elimination for \\ Requirement Engineering
\thanks{This work was supported by the DARPA LOGiCS project under contract FA8750-20-C-0156 and by NSF and ASEE through an eFellows postdoctoral fellowship.}
}

\author{
\IEEEauthorblockN{Inigo Incer}
\IEEEauthorblockA{
\textit{CMS, Caltech}\\
Pasadena, CA, USA
}
\and
\IEEEauthorblockN{Albert Benveniste}
\IEEEauthorblockA{
\textit{INRIA/IRISA}\\
Rennes, France
}
\and
\IEEEauthorblockN{Richard M. Murray}
\IEEEauthorblockA{
\textit{CDS, Caltech}\\
Pasadena, CA, USA
}
\and
\IEEEauthorblockN{Alberto Sangiovanni-Vincentelli}
\IEEEauthorblockA{
\textit{EECS, UC Berkeley}\\
Berkeley, CA, USA
}
\and
\IEEEauthorblockN{Sanjit A. Seshia}
\IEEEauthorblockA{
\textit{EECS, UC Berkeley}\\
Berkeley, CA, USA
}
}

\maketitle
\begin{abstract}
Deriving system-level specifications from component specifications usually involves the elimination of variables that are not part of the interface of the top-level system.
This paper presents algorithms for eliminating variables from formulas by computing refinements or relaxations of these formulas in a context. We discuss a connection between this problem and optimization and give efficient algorithms to compute refinements and relaxations of linear inequality constraints.
\end{abstract}

\begin{IEEEkeywords}
automated reasoning, deduction, specifications, variable elimination
\end{IEEEkeywords}

\section{Introduction}

In the setting of requirement engineering using assume-guarantee specifications \cite{BenvenisteContractBook,Incer:EECS-2022-99,DBLP:journals/ejcon/Sangiovanni-VincentelliDP12}, we come across the need to eliminate variables from a formula by
computing refinements or relaxations in a context.
Let $\phi$ be a formula containing some variables that must be eliminated. These will be called \emph{irrelevant variables}, and the set of such variables will be denoted $Y$. In order to carry out the elimination, suppose we can use information from a set of formulas $\Gamma$ called the \emph{context}.
\mybox{
We will consider the problems of synthesizing missing formulas in the expressions
\begin{align*}
    \Gamma \land \text{ ? } \models \phi
    \quad \text{and} \quad
    \Gamma \land \phi \models \text { ?}
\end{align*}
such that the result lacks irrelevant variables.}
We will call the first problem \emph{antecedent synthesis}, and the second \emph{consequent synthesis}.
If $\psi$ is a solution to the antecedent synthesis problem, we will say that $\psi$ is a $Y$-antecedent (or a $Y$-refinement) of $\phi$ in the context $\Gamma$.
If $\psi$ is a solution to the consequent synthesis problem, we will say that $\psi$ is a $Y$-consequent (or a $Y$-relaxation) of $\phi$ in the context $\Gamma$.
This problem appears in requirement engineering in the following situations.

\begin{figure}[ht]
    \centering
    \includegraphics[width=0.8\columnwidth]{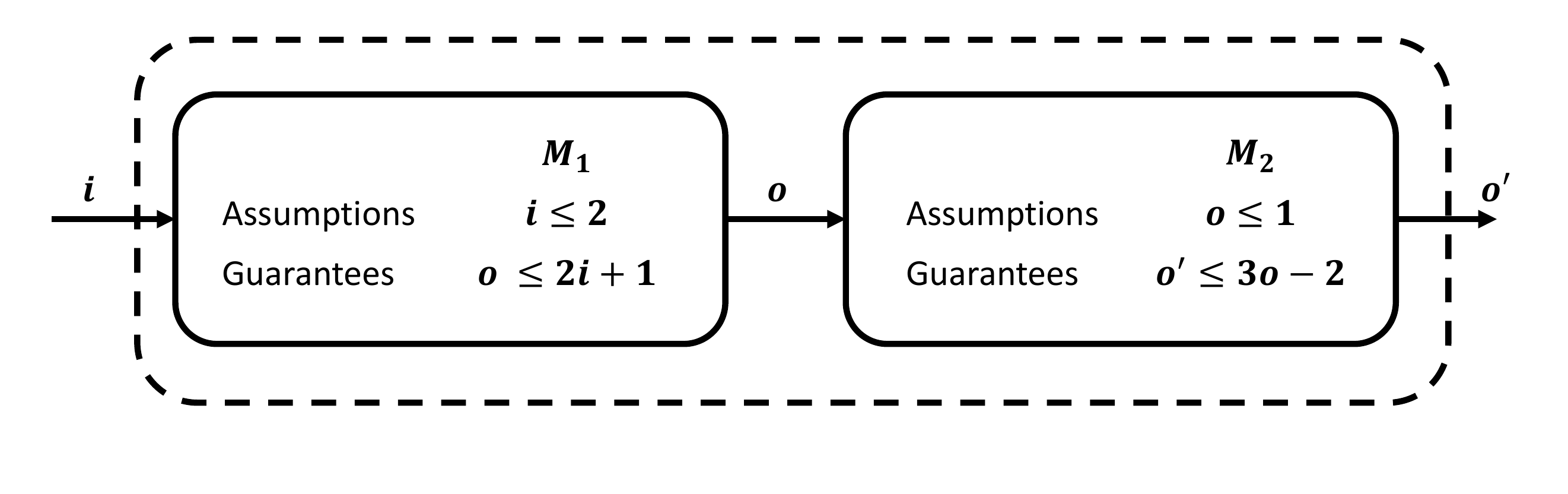}
    \caption{Two components connected in series. We wish to compute the specification of the top-level system composed of these two elements.}
    \label{fg:twobuffers}
\end{figure}

Figure \ref{fg:twobuffers} shows two components connected in series, $M_1$ and $M_2$. The first has input $i$ and output $o$, and the second has input $o$ and output $o'$. Each component comes with its assumptions and guarantees. The natures of $M_1$ and $M_2$ are left abstract; they could be routines executing in order, or they could be physical systems that interact through their input and output ports.
Our problem is to 
obtain a specification for the entire system 
using the specifications of the subsystems in such a way that only the top-level input and output variables $i$ and $o'$ appear in the final answer. In other words, the top-level specification should not mention the internal variable $o$.

We would like to operate the system in such a way that the assumptions of the two components hold. This would mean that we can rely on the two subsystems to deliver their guarantees. Thus, the top-level system should assume $(i \le 2) \land (o \le 1)$. This cannot be the top-level specification because the second formula involves the \emph{irrelevant variable} $o$. We would like to find a term only depending on $i$ that somehow ensures that the assumptions $o \le 1$ of $M_2$ are satisfied.
For this, we make use of the knowledge that $M_1$ guarantees $o \le 2i + 1$ when $i \le 2$.
We want to transform the constraint $o \le 1$ into a constraint $\psi$ on the input $i$ with the property that, given the guarantees of $M_1$, $\psi$ implies $o \le 1$. That is, this new constraint should satisfy $\psi \land (o \le 2i + 1) \rightarrow (o \le 1)$, which means that $\psi$ should be a refinement (an antecedent) of $o \le 1$ in the context of the guarantees of $M_1$. We observe that
$\psi\colon i \le 0$ satisfies this requirement. Thus, we transform the term $o \le 1$ into the term $i \le 0$. The top-level assumptions become $i \le 0$. We can verify that these top-level assumptions ensure that subsystems $M_1$ and $M_2$ have their assumptions met.

\newcommand{\qeda}{\hfill\rule{1ex}{1ex}}

Similarly, the guarantees for the system are $(o \le 2i + 1) \land (o' \le 3 o - 2)$. Again, the variable $o$ is not welcome in the final answer, giving us two options: we could eliminate both terms and have no guarantees---which is right, but not useful---or we could relax (compute the consequent of) one of the terms in the context of the other term. We find out, for example, that $(o \le 2i + 1) \land (o' \le 3 o - 2) \rightarrow (o' \le 6i + 1)$. The constraint $o' \le 6i + 1$ is an acceptable promise for the system specification.

By computing antecedents and consequents, we concluded that the top-level system guarantees $o' \le 6i + 1$ as long as the input satisfies $i \le 0$.

\medskip

\begin{figure}[t]
    \centering
    \includegraphics[width=0.8\columnwidth]{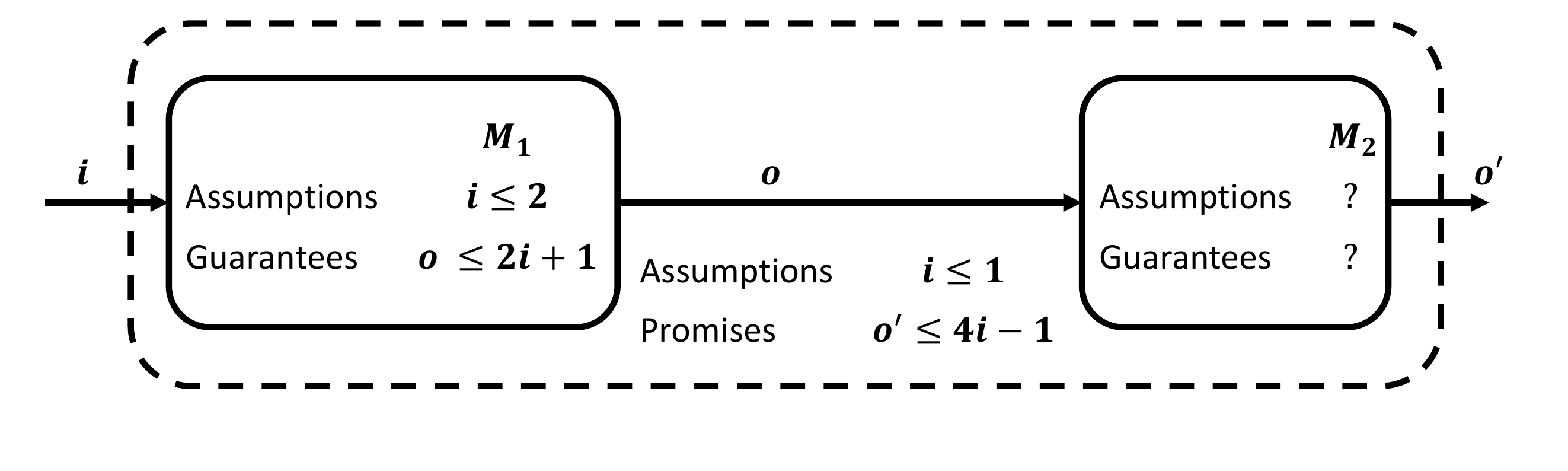}
    \caption[Quotient]{Two components connected in series. We are given the specification of the top-level system and the specification of $M_1$. The problem is to find the pre/post conditions of $M_2$ in order to obtain the given system-level specifications.}
    \label{fg:twobuffersquo}
\end{figure}

This example shows that the computation of antecedents and consequents plays a key role in the identification of assume-guarantee pairs or pre/post conditions. One may be tempted to link antecedents to assumptions and consequents to guarantees. This is not always so. Figure \ref{fg:twobuffersquo} show a situation in which we again have two components connected in series, $M_1$ and $M_2$, with inputs and outputs as before. Now we are given the top level assumptions and guarantees, and we also know the assumptions and guarantees of $M_1$. The problem is to find the pre/post conditions of $M_2$ using this data.

We know that the top level assumes that $i \le 1$. Under these assumptions, $M_1$ guarantees $o \le 2i + 1$.
The assumptions of $M_2$ should be met when the top-level system is operating within its assumptions. Thus, the assumptions of $M_2$ should be implied by the data $(i \le 1) \land (o \le 2i + 1)$. Since the assumptions of $M_2$ should only depend on $o$, we obtain the expression $o \le 3$.

Now we look for the guarantees of $M_2$, which we call $\psi$. The guarantees of $M_1$ and $M_2$ together must imply the top-level guarantees. Thus, we have the expression $\psi \land (o \le 2i + 1) \rightarrow (o' \le 4i - 1)$. In other words, $\psi$ is an antecedent of $o \le 2i + 1$ in the context $o' \le 4i - 1$. We require $\psi$ to only refer to variables $o$ and $o'$ and observe that $o' \le 2o - 3$ is an acceptable promise.

We conclude that $M_2$ should assume $o \le 3$ and promise $o' \le 2o - 3$.

\medskip

The examples just described motivate us to study automated mechanisms for the computation of antecedents and consequents of formulas in a given context with the objective of removing dependencies on irrelevant variables.
We first consider this problem for general first-order formulas and then specialize to the situation when
formulas are expressed as linear constraints in a context of linear inequalities. We provide efficient algorithms to address this problem.
Our previous discussion shows that this problem is of relevance to requirement engineering.



\section{Computing antecedents and consequents in first-order logic}
\label{sc:pord}

Suppose $\phi$ is a formula in first-order logic with free variables $x = (x_1, \ldots, x_m)$ and $y = (y_1, \ldots, y_n)$, and
$\Gamma$ a formula with free variables $x$, $y$, and $z = (z_1, \ldots, z_o)$.
As a matter of notation, when the free variables of a formula are understood, we will simply write the name of the formula, e.g., $\phi(x,y)$ is synonymous with $\phi$.
We let $Y$ be the set of \emph{irrelevant variables} that we want to eliminate from $\phi$. Throughout this paper, the set of irrelevant variables is always $Y = \{y_i\}_i$, i.e., we are always interested in eliminating the $y$ variables from $\phi$.
\begin{definition}\label{bjasjbcgw}
We say that a formula $\psi(x,z)$ with free variables $x = (x_1, \ldots, x_m)$ and $z=(z_1, \ldots, z_o)$ is a $Y$-antecedent of $\phi$ in the context $\Gamma$ if
$$
\Gamma(x,y,z) \land \psi(x,z) \models \phi(x,y).
$$
We say that $\psi(x,z)$ is a $Y$-consequent of $\phi$ in the context $\Gamma$ if
$$
\Gamma(x,y,z) \land \phi(x,y) \models \psi(x,z).
$$
\end{definition}

\mybox{
Our objective is to eliminate the irrelevant variables $y$ from $\phi$ by synthesizing a $Y$-antecedent or $Y$-consequent of $\phi$ in the context $\Gamma$.}

This section contains the following results:
\begin{enumerate}
\item a characterization of the optimal solutions for $Y$-antecedent/consequent synthesis in general (Proposition~\ref{kugbkgk});
\item a characterization of the optimal solution to this problem when the context can be expressed as a conjunction of a formula that depends on irrelevant variables and a formula that does not (Proposition~\ref{kjbhgbjk});
\item a characterization of optimal solutions to this problem when $\phi$ monotonically depends on a function of $y$ and does not depend on $y$ in any other way (Proposition~\ref{kgbjsfgbj}); and
\item methods to synthesize $Y$-antecedents/consequents compositionally from the syntax of $\phi$ and $\Gamma$ (Propositions~\ref{kbqgkqbg} and~\ref{kqxnfkxr}).
\end{enumerate}

We define two formulas obtained from $\phi$, $\Gamma$, and $Y$. We then discuss an important property of these formulas.

\begin{definition}
Given $\phi$, $\Gamma$, and $Y$ as above, let
$$\phi_\Gamma \coloneqq \forall y.\; \left( \Gamma \rightarrow \phi \right)$$ and
$$\phi^\Gamma \coloneqq \exists y.\; \left( \Gamma \land \phi \right).$$
\end{definition}


\begin{proposition}\label{kugbkgk}
    A formula $\psi(x,z)$ is a $Y$-antecedent for $\phi$ in the context $\Gamma$ if and only if 
    $\psi \models \phi_\Gamma$.
    
    A formula $\rho(x,z)$ is a $Y$-consequent for $\phi$ in the context $\Gamma$ if and only if 
    $\phi^\Gamma \models \rho$.
\end{proposition}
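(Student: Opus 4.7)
The plan is to prove each equivalence by unfolding the semantic definition of entailment and then using the crucial syntactic fact that neither $\psi$ nor $\rho$ has a free occurrence of $y$. This free-variable restriction is what lets us move the $y$-quantifier into the formula. The two statements are duals, so the proofs will be symmetric, with universal/implicational structure on the antecedent side and existential/conjunctive structure on the consequent side.

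For the antecedent direction, I would start from the definition of $Y$-antecedent: $\psi$ is a $Y$-antecedent iff $\Gamma(x,y,z)\land\psi(x,z) \models \phi(x,y)$, which by semantics of $\models$ unfolds to the statement that for every valuation of $x,y,z$, if $\Gamma(x,y,z)$ and $\psi(x,z)$ hold then $\phi(x,y)$ holds. Because $\psi$ does not involve $y$, the truth of $\psi(x,z)$ is insensitive to $y$, so I can re-quantify: this is equivalent to saying that for every $x,z$, if $\psi(x,z)$ holds then for every $y$ we have $\Gamma(x,y,z)\rightarrow\phi(x,y)$. That is precisely $\psi \models \forall y.\,(\Gamma \rightarrow \phi) = \phi_\Gamma$. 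Both implications go through at once since each step is a semantic equivalence.

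For the consequent direction, I would proceed dually. Unfolding $\Gamma \land \phi \models \rho$ gives: for every valuation of $x,y,z$, if $\Gamma(x,y,z)\land\phi(x,y)$ holds then $\rho(x,z)$ holds. Since $\rho$ has no free $y$, its truth depends only on $x,z$, so this says exactly that whenever there is some $y$ witnessing $\Gamma(x,y,z)\land\phi(x,y)$, we must have $\rho(x,z)$. That is the claim $\exists y.\,(\Gamma \land \phi) \models \rho$, i.e.\ $\phi^\Gamma \models \rho$.

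The only real subtlety, and the step most worth stating carefully, is the quantifier-shift that uses the hypothesis that $y$ is not free in $\psi$ or $\rho$; this is the standard rule $(\forall y.\,\alpha) \land \beta \equiv \forall y.\,(\alpha \land \beta)$ (and its existential dual) when $y \notin \mathrm{FV}(\beta)$. Once that is applied there are no remaining calculations — both biconditionals are immediate chains of semantic equivalences. I do not foresee any genuine obstacle beyond making the variable-hygiene assumption explicit in the write-up.
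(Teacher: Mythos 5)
Your argument is correct and follows essentially the same route as the paper's proof: both reduce the entailments via the deduction-theorem-style equivalence and then absorb the quantifier over $y$ using the fact that $y$ is not free in $\psi$ (resp.\ $\rho$). You simply make the variable-hygiene step explicit where the paper leaves it implicit, which is a reasonable choice.
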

\begin{proof}
    We have
    $\Gamma \land \psi \models \phi 
    \Leftrightarrow \psi \models (\Gamma \to \phi)
    \Leftrightarrow \psi \models \phi_\Gamma$.
    Similarly,
    $\Gamma \land \phi \models \rho \Leftrightarrow \exists y.\; (\Gamma \land \phi) \models \rho$.
\end{proof}

This result means that $\phi_\Gamma$ is the weakest $Y$-antecedent of $\phi$ in the context $\Gamma$. Conversely, $\phi^\Gamma$ is the strongest $Y$-consequent.

\begin{lemma}
The denotations of $\phi_\Gamma$ and $\phi^\Gamma$ are
$$\llbracket \phi_\Gamma \rrbracket = \bigcap_{\substack{b \in \dom(y)\\ \Gamma(x,b,z)}} \llbracket \phi (x,b,z) \rrbracket \text{ and }
\llbracket \phi^\Gamma \rrbracket = \bigcup_{\substack{b \in \dom(y)\\ \Gamma(x,b,z)}} \llbracket \phi (x,b,z) \rrbracket.$$
\end{lemma}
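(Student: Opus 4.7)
The plan is to prove each equality by unfolding the standard Tarskian semantics of the quantifiers $\forall y$ and $\exists y$, and then recognizing the resulting set-theoretic expression as the claimed intersection or union. Since both equalities are symmetric in structure, I would present the argument for $\phi_\Gamma$ in detail and then indicate the dual argument for $\phi^\Gamma$.

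First, I would fix an arbitrary assignment $(a,c) \in \dom(x) \times \dom(z)$ and show that $(a,c) \in \llbracket \phi_\Gamma \rrbracket$ iff $(a,c)$ lies in the intersection on the right-hand side. By definition of $\phi_\Gamma = \forall y.\,(\Gamma \to \phi)$, the pair $(a,c)$ satisfies $\phi_\Gamma$ exactly when, for every $b \in \dom(y)$, the implication $\Gamma(a,b,c) \to \phi(a,b,c)$ holds. Rewriting the implication as a conditional, this is equivalent to saying: for every $b \in \dom(y)$ such that $\Gamma(a,b,c)$ holds, the formula $\phi(a,b,c)$ also holds. That is precisely the statement that $(a,c)$ belongs to $\llbracket \phi(x,b,z) \rrbracket$ for every $b$ satisfying $\Gamma(x,b,z)$, which in turn is exactly membership in the indexed intersection.

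For $\phi^\Gamma = \exists y.\,(\Gamma \land \phi)$, I would use the dual reasoning: $(a,c) \in \llbracket \phi^\Gamma \rrbracket$ iff there exists some $b \in \dom(y)$ such that both $\Gamma(a,b,c)$ and $\phi(a,b,c)$ hold, which is exactly the statement that $(a,c)$ lies in $\llbracket \phi(x,b,z) \rrbracket$ for at least one $b$ satisfying $\Gamma(x,b,z)$, i.e., membership in the indexed union.

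There is no real obstacle here: the lemma is essentially a restatement of the meaning of $\forall$ and $\exists$ in terms of set operations, once one has adopted the convention that the index set of the intersection/union is $\{ b \in \dom(y) \mid \Gamma(x,b,z)\}$. The only subtlety worth flagging is notational: the slices $\llbracket \phi(x,b,z)\rrbracket$ and the index condition $\Gamma(x,b,z)$ are both parameterized by the ambient assignment to $x$ and $z$, so the intersection/union on the right-hand side should be read pointwise in $(x,z)$; I would include a brief remark to make this reading explicit and avoid any ambiguity.
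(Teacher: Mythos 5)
Your proof is correct and follows essentially the same route as the paper's: both arguments simply unfold the semantics of $\forall$ and $\exists$ and identify the result with the guarded intersection/union, the only cosmetic difference being that the paper writes the computation as a chain of set equalities while you argue pointwise over assignments $(a,c)$.
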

\begin{proof}
    We compute
    \begin{align*}
    \llbracket \phi_\Gamma \rrbracket &=
    \llbracket \forall y\; (\Gamma \to \phi) \rrbracket =
    \bigcap_{b \in \dom(y)} \llbracket \Gamma(x,b,z) \to \phi(x,b) \rrbracket \\ &=
    \bigcap_{\substack{b \in \dom(y) \\ \Gamma(x,b,z)}} \llbracket \phi(x,b) \rrbracket.
    \end{align*}
    A similar reasoning applies to $\phi^\Gamma$.
\end{proof}

Proposition~\ref{kugbkgk} provides a universal characterization of $Y$-antecedents and consequents. 
We now state a result that allows us to synthesize these objects using partial information from the context.
\begin{proposition}\label{kjbhgbjk}
Suppose that the context $\Gamma$ can be written as $\Gamma(x,y,z) = \Gamma_1(x,z) \land \Gamma_2(x,y,z)$. Then
$\phi_{\Gamma} = \Gamma_1 \to \phi_{\Gamma_2}$ and
$\phi^{\Gamma} = \Gamma_1 \land \phi_{\Gamma_2}$.
\end{proposition}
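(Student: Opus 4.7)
The plan is to prove both identities by direct syntactic manipulation of quantifiers, exploiting the fact that $y$ does not occur free in $\Gamma_1(x,z)$ and can therefore be pushed across any quantifier binding $y$.

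For the first identity, I would unfold $\phi_\Gamma = \forall y.\,(\Gamma \to \phi)$, substitute $\Gamma = \Gamma_1 \land \Gamma_2$, and curry the implication via the propositional equivalence $(A \land B) \to C \equiv A \to (B \to C)$ to obtain $\forall y.\,(\Gamma_1 \to (\Gamma_2 \to \phi))$. Since $y$ is not free in $\Gamma_1$, the standard rule for passing a universal quantifier across an implication whose antecedent is $y$-free lets me rewrite this as $\Gamma_1 \to \forall y.\,(\Gamma_2 \to \phi)$, which equals $\Gamma_1 \to \phi_{\Gamma_2}$ by definition. This closes the first part of the proposition.

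For the second identity, the strategy is analogous. I would start from $\phi^\Gamma = \exists y.\,(\Gamma \land \phi) = \exists y.\,(\Gamma_1 \land \Gamma_2 \land \phi)$ and factor the $y$-free conjunct $\Gamma_1$ out of the existential, which is valid because $y$ is not free in $\Gamma_1$. This produces $\Gamma_1 \land \exists y.\,(\Gamma_2 \land \phi)$. The plan at this stage is to identify the inner subformula with $\phi_{\Gamma_2}$ so as to assemble the right-hand side $\Gamma_1 \land \phi_{\Gamma_2}$ stated in the proposition, using the preceding lemma's denotational characterization to match the two expressions.

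The main obstacle lies precisely in this last identification step for the $\phi^\Gamma$ case. Unfolding $\exists y.\,(\Gamma_2 \land \phi)$ denotationally via the preceding lemma yields the union $\bigcup_{b:\,\Gamma_2(x,b,z)} \llbracket \phi(x,b)\rrbracket$, whereas the subscript operator $\phi_{\Gamma_2}$ denotes the corresponding intersection. I would therefore examine the right-hand side of the proposition very carefully at this point, verify the identification by denotational bookkeeping, and check that no free occurrence of $y$ survives in the final expression on either side; once the match between the quantifier form produced by the factoring argument and the operator named on the right-hand side is settled, the remaining algebra mirrors that of the first identity and closes the proof.
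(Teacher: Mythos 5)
Your proof takes essentially the same route as the paper's: curry the implication and pull $\forall y$ past the $y$-free antecedent $\Gamma_1$ to get $\Gamma_1 \to \phi_{\Gamma_2}$, and dually factor the $y$-free conjunct $\Gamma_1$ out of the existential. The obstacle you flag in the second identity is real but is a typo in the proposition's statement rather than a gap in your argument: the intended right-hand side is $\Gamma_1 \land \phi^{\Gamma_2}$ (superscript, i.e.\ the existential form $\exists y.\,(\Gamma_2 \land \phi)$), which is exactly what your factoring step produces and what the paper's own proof derives in its final line.
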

\begin{proof}
From Definition~\ref{bjasjbcgw}, we have
\begin{align*}
\phi_\Gamma &=
\forall y.\; ((\Gamma_1 \land \Gamma_2) \to \phi) \\ &=
\forall y.\; (\Gamma_1(x,z) \to (\Gamma_2(x,y,z) \to \phi(x,y))) \\ &=
\Gamma_1 \to ( \forall y.\; (\Gamma_2(x,y,z) \to \phi(x,y))) =
\Gamma_1 \to \phi_{\Gamma_2}
\intertext{and}
\phi_\Gamma &=
\exists y.\; ((\Gamma_1(x,z) \land \Gamma_2(x,y,z)) \land \phi(x,y)) \\ &=
\Gamma_1 \land (\exists y.\; (\Gamma_2(x,y,z)) \land \phi(x,y)) =
\Gamma_1 \land \phi^{\Gamma_2}.\qedhere
\end{align*}
\end{proof}

An immediately corollary of Proposition~\ref{kjbhgbjk} is the fact that
$\phi_\Gamma \land \Gamma = \phi_{\Gamma_2} \land \Gamma$ and
$\phi^\Gamma \land \Gamma = \phi^{\Gamma_2} \land \Gamma$. This means that $\phi_{\Gamma_2}$ and $\phi^{\Gamma_2}$ are optimal in the context $\Gamma$ and thus can be used as the optimal $Y$-antecedents and consequents, respectively. The following example shows that quantifying over smaller contexts can yield preferable results.

\begin{example}
Suppose that $\phi$ and $\Gamma$ are formulas in linear arithmetic
such that $\phi(x,y) = (x \le y)$ and $\Gamma(x,y,z) = \Gamma_1(x,z) \land \Gamma_2(x,y,z)$, where $\Gamma_1 = (z \le 2)$ and $\Gamma_2 = (z \le y)$. From Proposition \ref{kugbkgk}, the optimal $Y$-antecedent of $\phi$ in $\Gamma$ is
\begin{align*}
\phi_\Gamma &= \forall y. \; ((z \le 2) \land (z \le y) \to (x \le y)) \\ &= ((z \le 2) \to (x \le z)).
\end{align*}
As we just discussed, we may as well use $\phi_{\Gamma_2}$ as the $Y$-antecedent of $\phi$. This formula is
\begin{align*}
\phi_{\Gamma_2} &= \forall y. \; ((z \le y) \to (x \le y)) = (x \le z).
\end{align*}
For requirement engineering, it is preferable to output $\phi_{\Gamma_2}$ instead of $\phi_{\Gamma}$ because it is syntactically simpler.
\qeda
\end{example}

Now we discuss an optimization formulation of antecedent/consequent synthesis.

\begin{proposition}\label{kgbjsfgbj}
Suppose that $\phi$ can be expressed as $\phi(x,g(y))$, where $\phi$ is monotonic in the second argument. Define
{\small
\begin{align*}
g^-(a,c) =
\begin{cases}
\begin{aligned}[t]
& \underset{b \in \reals^n}{\text{minimize}}
& & g(b) \\
& \text{subject to}
& & [x\coloneqq a, y \coloneqq b, z \coloneqq c] \models \Gamma
\end{aligned}
\end{cases}
\end{align*}}
and
{\small
\begin{align*}
g^+(a,c) =
\begin{cases}
\begin{aligned}[t]
& \underset{b \in \mathbb{R}^n}{\text{maximize}}
& & g(b) \\
& \text{subject to}
& & [x\coloneqq a, y \coloneqq b, z \coloneqq c] \models \Gamma.
\end{aligned}
\end{cases}
\end{align*}
}
Then we have $\phi(x, g^-(x,z)) \models \phi_\Gamma$ and
$\Gamma \land \phi_\Gamma \models \phi(x, g^-(x,z))$. Similarly,
$\phi^\Gamma \models \phi(x, g^+(x,z))$ and
$\phi(x, g^+(z,z) \land \Gamma \models \phi^\Gamma$.
\end{proposition}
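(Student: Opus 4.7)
The plan is to unfold Definition~\ref{bjasjbcgw}, reason pointwise over assignments, and use the assumption that $\phi$ depends on $y$ only through $g(y)$ together with monotonicity of $\phi$ in the second slot (I take monotonicity to mean nondecreasing; the nonincreasing case is symmetric). Under this hypothesis, quantifying $\phi(x,g(y))$ over the feasible set $\{b : \Gamma(x,b,z)\}$ is controlled by the extremal values of $g$: the minimum $g^-(x,z)$ controls the universal direction, and the maximum $g^+(x,z)$ controls the existential one. So in each of the four claims the goal reduces to matching a universal/existential quantifier over the feasible set with an $\inf$/$\sup$ of a monotonic function.

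For the antecedent part, I would first show $\phi(x,g^-(x,z))\models \phi_\Gamma$. Fix an assignment $(a,c)$ at which $\phi(a,g^-(a,c))$ holds and pick any $b$ with $\Gamma(a,b,c)$. By definition of $g^-$ as the minimum of $g$ on the feasible set, $g(b)\ge g^-(a,c)$, so monotonicity of $\phi$ in the second argument yields $\phi(a,g(b))$. Since $b$ was arbitrary, $\phi_\Gamma(a,c)=\forall y.(\Gamma\to\phi)$ holds. For the converse entailment $\Gamma\land\phi_\Gamma\models \phi(x,g^-(x,z))$, fix $(a,b,c)$ satisfying $\Gamma$ and $\phi_\Gamma$. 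Feasibility at $(a,c)$ guarantees that the minimum in the definition of $g^-$ is attained by some witness $b^\star$ with $\Gamma(a,b^\star,c)$ and $g(b^\star)=g^-(a,c)$. Instantiating the universal quantifier in $\phi_\Gamma(a,c)$ at $b^\star$ gives $\phi(a,g(b^\star))$, i.e.\ $\phi(a,g^-(a,c))$.

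The consequent part is dual. For $\phi^\Gamma\models \phi(x,g^+(x,z))$, an assignment satisfying $\phi^\Gamma(a,c)$ supplies a witness $b$ with $\Gamma(a,b,c)$ and $\phi(a,g(b))$; since $g(b)\le g^+(a,c)$, monotonicity gives $\phi(a,g^+(a,c))$. For $\phi(x,g^+(x,z))\land\Gamma\models\phi^\Gamma$, the maximizer $b^\star$ in the definition of $g^+$ serves as an explicit existential witness for $\phi^\Gamma$.

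The only non-routine point is attainment of the optima. The argument tacitly assumes that $g^-$ and $g^+$ are realized in the feasible region whenever it is non-empty; this is automatic in the linear-arithmetic setting that is the focus of later sections, and more generally can be accommodated by replacing $\min/\max$ with $\inf/\sup$ provided $\phi$ is appropriately semi-continuous in its second argument. The complementary case in which $\Gamma$ is infeasible at $(a,c)$ requires no separate treatment: on the antecedent side $\phi_\Gamma$ is vacuously true and $\Gamma$ is false, and on the consequent side $\phi^\Gamma$ is vacuously false and $\Gamma$ is again false, so both entailments in each direction hold trivially.
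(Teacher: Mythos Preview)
Your proposal is correct and follows essentially the same approach as the paper: both arguments use monotonicity of $\phi$ in the second slot to collapse the intersection/union over feasible $b$ (the denotation of $\phi_\Gamma$ and $\phi^\Gamma$ from the preceding lemma) to the single value $\phi(x,g^\mp(x,z))$, with the infeasible case handled by vacuity. Your write-up is simply more explicit---spelling out the pointwise reasoning, the role of the optimizer $b^\star$, and the attainment caveat---whereas the paper compresses all four entailments into the one-line identity $\bigcap_{\Gamma(a,b,c)}\llbracket\phi(a,g(b))\rrbracket=\llbracket\phi(a,g^-(a,c))\rrbracket$.
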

\begin{proof}
If $[x\coloneqq a, y\coloneqq b, z\coloneqq c] \not \models \Gamma$ for all $b \in \dom(y)$, then 
$[x\coloneqq a, z\coloneqq c] \models \phi_\Gamma$. Otherwise,
$\bigcap_{\substack{b \in \dom(y) \\ \Gamma(a,b,c)}} \llbracket \phi(a, g(b)) \rrbracket = \llbracket \phi(a, g^-(a,c)) \rrbracket$.
The second part is proved similarly.
\end{proof}

From Propositions~\ref{kugbkgk} and \ref{kgbjsfgbj}, we know that
$\phi(x, g^-(x,z))$ and $\phi(x, g^+(x,z))$ are, respectively, a $Y$-antecedent and a $Y$-consequent of $\phi$ in the context $\Gamma$.
Moreover, $\phi(x, g^-(x,z))$ and $\phi(x, g^+(x,z))$ have the same denotations as the optimal solutions
$\phi_\Gamma$ and $\phi^\Gamma$, respectively, \emph{in the context $\Gamma$}.
Thus, we regard $\phi(x, g^-(x,z))$ and $\phi(x, g^+(x,z))$ as optimal.

\begin{example}\label{bghgh}
Suppose we want to compute an antecedent of the formula
$$\phi\colon (p \land q) \lor r$$
in the context $\Gamma\colon s \rightarrow q$, where $q$ is an irrelevant variable.
Let $g(q) = q$ and $\phi(x, g(q)) = (p \land q) \lor r$. Then $\phi$ is monotonic in its last argument. To apply Proposition~\ref{kgbjsfgbj}, we compute $g^-$:
\begin{align*}
    g^-(a,c) =&
    \begin{cases}
    \begin{aligned}[t]
    & \underset{b \in \{0,1\}}{\text{minimize}}
    & & q \\
    & \text{subject to}
    & & [p\coloneqq a, q \coloneqq b, s\coloneqq c] \models s \to q
    \end{aligned}
    \end{cases} \\
=&
c.
\end{align*}
By Proposition~\ref{kgbjsfgbj}, we conclude that $\phi(p, g^-(p,s))$ is a $Y$-antecedent of $\phi$ in the given context, i.e., we get the antecedent
$$
(p \land s) \lor r.
$$
In contrast, we compute $\phi_\Gamma \colon \forall y. \; ((s \rightarrow q) \rightarrow ((p\land q) \lor r)) = (p \lor r) \land ( s \lor r)$.
\qeda
\end{example}

In Example~\ref{bghgh}, $\phi_\Gamma$ and $\phi(x, g^-(x,z))$ match. Yet, observe that the latter immediately yields a result in a syntactic form which is closer to the original $\phi$. This happens because this expression is obtained by simply replacing $q$ with $g^-(x,z)$ in $\phi$.
Having results which are syntactically similar to the original expressions is important in requirement engineering, as the syntax of requirements entered by users has a close connection to the semantics they want to express.

\subsection{Compositional results}

We now express $\phi$ and $\Gamma$ as
$\phi = \bigvee_j \bigwedge_i \phi_i^j(x,y)$, where the $\phi_i^j(x,y)$ are clauses (i.e., terms formed from atomic formulas and Boolean connectives), and
$\Gamma = \bigvee_k \Gamma^k(x,y,z)$, where each $\Gamma^k(x,y,z)$ is a conjunction of clauses.
Instead of synthesizing an antecedent or consequent for $\phi$ directly, we will seek methods to do this compositionally.

\begin{proposition}\label{kbqgkqbg}
    Let $\tilde \phi_\Gamma \coloneqq \bigvee_j \bigwedge_{i,k} \forall y.\; \left( \Gamma^k \rightarrow \phi_i^j \right)$
    and    $\tilde \phi^\Gamma \coloneqq \bigvee_{k,j} \bigwedge_{i} \exists y.\; \left( \Gamma^k \land \phi_i^j \right)$. 
    $\tilde \phi_\Gamma$ is an antecedent and $\tilde \phi^\Gamma$ a consequent of $\phi$ in the context $\Gamma$.
\end{proposition}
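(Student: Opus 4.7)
The plan is to invoke Proposition~\ref{kugbkgk} to reduce the two assertions to purely semantic inclusions: $\tilde \phi_\Gamma$ is a $Y$-antecedent iff $\tilde \phi_\Gamma \models \phi_\Gamma$, and $\tilde \phi^\Gamma$ is a $Y$-consequent iff $\phi^\Gamma \models \tilde \phi^\Gamma$. Each of these becomes a routine quantifier manipulation starting from the CNF/DNF decompositions of $\phi = \bigvee_j \bigwedge_i \phi_i^j$ and $\Gamma = \bigvee_k \Gamma^k$ given before the proposition.

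For the antecedent direction, I would fix an arbitrary disjunct index $j$ and push the universal quantifier and the conjunction past each other, rewriting $\bigwedge_{i,k} \forall y.\,(\Gamma^k \to \phi_i^j)$ as $\forall y.\,\bigwedge_k \bigl(\Gamma^k \to \bigwedge_i \phi_i^j\bigr) = \forall y.\,\bigl(\Gamma \to \bigwedge_i \phi_i^j\bigr)$, using that $\forall$ distributes over $\bigwedge$ and that $\Gamma$ is precisely $\bigvee_k \Gamma^k$. Since $\bigwedge_i \phi_i^j$ entails $\phi$, this formula entails $\phi_\Gamma = \forall y.\,(\Gamma \to \phi)$. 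Taking the disjunction over $j$ preserves the entailment, yielding $\tilde \phi_\Gamma \models \phi_\Gamma$.

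For the consequent direction, I would first push the outer existential in $\phi^\Gamma = \exists y.\,(\Gamma \land \phi)$ past the disjunctions over $k$ and $j$, obtaining the equivalent form $\bigvee_{k,j} \exists y.\,\bigl(\Gamma^k \land \bigwedge_i \phi_i^j\bigr)$. The crucial step is then the entailment $\exists y.\,\bigl(\Gamma^k \land \bigwedge_i \phi_i^j\bigr) \models \bigwedge_i \exists y.\,(\Gamma^k \land \phi_i^j)$, which holds because any witness $b$ for $y$ that makes the left-hand side true simultaneously witnesses each conjunct on the right. Disjoining over $k,j$ then gives $\phi^\Gamma \models \tilde \phi^\Gamma$.

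The one subtlety to keep in mind is that the two quantifier interchanges go in opposite directions: $\forall$ commutes with $\bigwedge$ in both directions, but $\exists$ commutes with $\bigwedge$ only in the ``common witness'' direction used above. This is exactly why the proposition is stated as an entailment rather than an equality, and why $\tilde \phi_\Gamma$ and $\tilde \phi^\Gamma$ need not coincide with the optimal solutions $\phi_\Gamma$ and $\phi^\Gamma$ of Proposition~\ref{kugbkgk}. I do not expect any real obstacle; the work lies entirely in bookkeeping the index sets and being careful about the direction of each quantifier move.
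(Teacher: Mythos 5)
Your proposal is correct and follows essentially the same route as the paper: reduce both claims to $\tilde\phi_\Gamma \models \phi_\Gamma$ and $\phi^\Gamma \models \tilde\phi^\Gamma$ via Proposition~\ref{kugbkgk}, then carry out the same quantifier/connective rearrangements, with the only (one-way) entailments being $\bigvee_j\forall y(\cdots)\models\forall y\bigvee_j(\cdots)$ and $\exists y\bigwedge_i(\cdots)\models\bigwedge_i\exists y(\cdots)$. Your per-disjunct monotonicity phrasing of the antecedent step is an immaterial variant of the paper's chain of equalities and entailments.
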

\begin{proof}
$
\tilde \phi_\Gamma =
\bigvee_j \bigwedge_{i,k} \forall y.\; ( \Gamma^k \rightarrow \phi_i^j ) =
\bigvee_j \forall y.\; ( \Gamma \rightarrow \bigwedge_i \phi_i^j ) \models
\forall y.\; \bigvee_j ( \Gamma \rightarrow \bigwedge_i \phi_i^j ) = \phi_\Gamma
$.
By applying Proposition~\ref{kugbkgk}, we proved the first part. We also have
$
\phi^\Gamma =
\exists y.\; ( \Gamma \land \phi ) =
\exists y.\; \bigvee_{k,j} \bigwedge_{i} ( \Gamma^k \land \phi_i^j ) =
\bigvee_{k,j} \exists y.\; \bigwedge_{i} ( \Gamma^k \land \phi_i^j ) \models
\bigvee_{k,j} \bigwedge_{i} \exists y.\; ( \Gamma^k \land \phi_i^j ) =
\tilde \phi^\Gamma
$,
which shows the second part after applying Proposition~\ref{kugbkgk}.
\end{proof}

The compositional result of Proposition~\ref{kbqgkqbg} allows us to focus on the case in which $\phi$ is a clause and $\Gamma$ is a conjunction of clauses. \emph{We shall assume this from now on}.

To further exploit compositionality, express $\phi$ as
\begin{equation}\label{jhgfkjgb}
\phi(x,y) = \phi(x, g_1(y), \ldots, g_p(y)),
\end{equation}
where $\phi$ is required to be monotonic in all arguments, except the first. We assume that our language allows any clause to be expressed in this way. This is true for linear arithmetic and for real arithmetic with a ReLu (rectified linear unit) function.

\begin{example}\label{kajhcghsk}
Suppose $\phi$ is the real-arithmetic formula $x^2 - x y + y^2 \le 0$. This is equivalent to $x^2 - r(x) y + r(-x) y + y^2 \le 0$, where $r$ is the ReLu function. We can thus write $\phi$ as $\phi(x, g_1(y), g_2(y), g_3(y))$ with $g_1(y) = y$, $g_2(y) = -y$, and $g_3(y) = -y^2$.
\qeda
\end{example}

We now study how to exploit the structure \eqref{jhgfkjgb} to compute $Y$-antecedents and consequents for $\phi$ in $\Gamma$.

\begin{proposition}\label{kqxnfkxr}
Let
{\small
\begin{align*}
g_i^-(a,c) =
\begin{cases}
\begin{aligned}[t]
& \underset{b \in \reals^n}{\text{minimize}}
& & g_i(b) \\
& \text{subject to}
& & [x\coloneqq a, y \coloneqq b, z \coloneqq c] \models \Gamma
\end{aligned}
\end{cases}
\end{align*}}
and
{\small
\begin{align*}
g_i^+(a,c) =
\begin{cases}
\begin{aligned}[t]
& \underset{b \in \mathbb{R}^n}{\text{maximize}}
& & g_i(b) \\
& \text{subject to}
& & [x\coloneqq a, y \coloneqq b, z \coloneqq c] \models \Gamma.
\end{aligned}
\end{cases}
\end{align*}
}
The formulas
$\phi(x, g_1^-(x,z), \ldots, g_p^-(x,z))$ and
$\phi(x, g_1^+(x,z), \ldots, g_p^+(x,z))$ are, respectively,
$y$-antecedents and consequents of $\phi$ in $\Gamma$.
\end{proposition}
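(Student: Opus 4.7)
The plan is to reduce via Proposition~\ref{kugbkgk} to showing $\phi(x, g_1^-(x,z), \ldots, g_p^-(x,z)) \models \phi_\Gamma$ and $\phi^\Gamma \models \phi(x, g_1^+(x,z), \ldots, g_p^+(x,z))$, and then to extend the single-coordinate argument of Proposition~\ref{kgbjsfgbj} coordinatewise across the $p$ monotonic slots occupied by $g_1(y), \ldots, g_p(y)$.

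First I would fix assignments $x \coloneqq a$ and $z \coloneqq c$ and appeal to the denotation lemma, which represents $\llbracket \phi_\Gamma \rrbracket$ and $\llbracket \phi^\Gamma \rrbracket$ at $(a,c)$ as, respectively, an intersection and a union of $\llbracket \phi(a, g_1(b), \ldots, g_p(b)) \rrbracket$ ranging over all $b$ satisfying $\Gamma(a,b,c)$. If no such $b$ exists, both containments are vacuous. Otherwise, for every feasible $b$ and every index $i$, optimality of the extrema gives $g_i^-(a,c) \le g_i(b) \le g_i^+(a,c)$, and applying the monotonicity of $\phi$ in its last $p$ arguments one slot at a time produces the chain $\phi(a, g_1^-(a,c), \ldots, g_p^-(a,c)) \Rightarrow \phi(a, g_1(b), \ldots, g_p(b)) \Rightarrow \phi(a, g_1^+(a,c), \ldots, g_p^+(a,c))$. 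Since the left implication holds uniformly in $b$, $\llbracket \phi(x, g_1^-(x,z), \ldots, g_p^-(x,z)) \rrbracket$ lies inside the intersection defining $\llbracket \phi_\Gamma \rrbracket$; dually, the union defining $\llbracket \phi^\Gamma \rrbracket$ lies inside $\llbracket \phi(x, g_1^+(x,z), \ldots, g_p^+(x,z)) \rrbracket$. Proposition~\ref{kugbkgk} converts these inclusions into the stated antecedent and consequent properties.

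The main obstacle is a bookkeeping one: the word ``monotonic'' hides a per-slot choice of direction. Where $\phi$ is in fact decreasing in some $g_i$, the chain above breaks unless one swaps the roles of $g_i^-$ and $g_i^+$ for that slot, or equivalently replaces $g_i$ by $-g_i$ at the outset so that representation~\eqref{jhgfkjgb} becomes monotonically increasing in every argument -- the normalization implicit in Example~\ref{kajhcghsk}. Once this is fixed, the proof factors into $p$ independent applications of the monotonicity step underpinning Proposition~\ref{kgbjsfgbj}, and no further work is required.
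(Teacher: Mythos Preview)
Your proposal is correct and follows essentially the same route as the paper: invoke the denotation lemma to write $\llbracket \phi_\Gamma \rrbracket$ as an intersection over feasible $b$, use monotonicity together with $g_i^-(a,c) \le g_i(b)$ to obtain the containment $\llbracket \phi(x, g_1^-(x,z), \ldots, g_p^-(x,z)) \rrbracket \subseteq \llbracket \phi_\Gamma \rrbracket$, and conclude via Proposition~\ref{kugbkgk} (and dually for consequents). Your remark about normalizing the direction of monotonicity is a useful clarification that the paper leaves implicit.
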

\begin{proof}
From Proposition \ref{kugbkgk}, we have
\begin{align*}
&\llbracket \phi_\Gamma \rrbracket =
\bigcap_{\substack{b \in \dom(y) \\ \Gamma(x,b,z)}} \llbracket \phi \rrbracket
\supseteq
\llbracket \phi(x, g_1^-(x,z), \ldots, g_p^-(x,z)) \rrbracket.
\end{align*}
We conclude that
$\phi(x, g_1^-(x,z), \ldots, g_p^-(x,z)) \models \phi_\Gamma$. Proposition~\ref{kugbkgk} yields the first part. The second part is proved similarly.
\end{proof}

Proposition~\ref{kqxnfkxr} tells us that we can independently optimize over the monotonic functions composing $\phi$ to compute $Y$-antecedents and consequents. This result does not yield the optimality guarantees of Proposition~\ref{kgbjsfgbj}, but it allows us to compute antecedents/consequents compositionally.

\begin{example}
Continuing Example \ref{kajhcghsk}, suppose we want to compute a consequent of $\phi$ in the context $\Gamma \colon (y \le 2) \land (1 \le y)$.
We apply Proposition~\ref{kqxnfkxr} by optimizing over the $g_i$ separately. We obtain $g_1^+(x) = 2$, $g_2^+(x) = -1$, and $g_3^+(x) = -1$. The formula $\phi(x,g_1^+(x),g_2^+(x),g_3^+(x)) = x^2 - 2r(x) - r(-x) -1 \le 0$ is a $Y$-consequent of $\phi$ in $\Gamma$.
\end{example}




\section{Linear inequality constraints}

We apply the results of Section \ref{sc:pord} to the situation when atoms are linear inequalities, or polyhedral constraints. Polyhedral constraints are an intuitive formalism for writing specifications for complex systems, as they allow us to place piecewise linear bounds on quantities of interest.

We consider formulas of the form $\bigvee_j \bigwedge_i \phi_i^j(x,y)$, where $\phi_i^j(x,y)$ are atoms,
and contexts of the form
$\bigvee_k \Gamma^k(x,y,z)$, where $\Gamma^k(x,y,z)$ are conjunctions of atoms. 

Due to Proposition~\ref{kbqgkqbg}, we will focus our attention on algorithms for the efficient computation of antecedents and consequents when $\phi$ is an atom and $\Gamma$ a conjunction of atoms. $\phi$ will have the form
\begin{equation*}
    \phi\colon \quad \sum_{i = 1}^m p_i x_i + \sum_{i = 1}^n q_i y_i + r \le 0,
\end{equation*}
where $r$ and the $p_i$ and $q_i$ are constants. The set of irrelevant variables to be eliminated is $Y = \{y_i\}_i$.
The context $\Gamma$ is a set of linear inequalities of the form
\begin{equation*}
    \Gamma = \left \{ \sum_{j = 1}^m \alpha_{ij} x_i + \sum_{j = 1}^n \beta_{ij} y_i + \sum_{j = 1}^o \gamma_{ij} z_i  + K_i \le 0 \right \}_{i = 1}^N,
\end{equation*}
where the $K_j$, $\alpha_i^j$, $\beta_i^j$, and $\gamma_i^j$ are constants.


Let $A = (\alpha_{ij}) \in \reals^{N \times m}$, $B = (\beta_{ij}) \in \reals^{N \times n}$, $C = (\gamma_{ij}) \in \reals^{N \times o}$, $K \in \reals^N$, $p \in \reals^m$, and $q \in \reals^n$. We also let $x = (x_i)$, $y = (y_i)$, $z = (z_i)$ be $m$-, $n$-, and $o$-dimensional vectors of variables, respectively.

\mybox{Our problem is to eliminate the $y$ variables from
\begin{equation}
\label{jbhgbkgh}
\phi\colon \quad p^\transp x + q^\transp y + r \le 0
\end{equation}
using the context
\begin{equation}
\label{jvbhgkvgb}
\Gamma\colon \quad A x + By + Cz + K \le 0
\end{equation}
by computing $Y$-antecedents/consequents.
}

\medskip

Let $b(x, z) = -K - A x - C z$. We obtain the following corollary from Proposition~\ref{kgbjsfgbj}.

\begin{corollary}\label{nlkflkg}
Let $\phi$ and $\Gamma$ be as above. Let
\begin{align}\label{lxkasjf}
    g^-(x,z) =
    \begin{cases}
    \begin{aligned}[t]
    & \underset{y \in \reals^n}{\text{minimize}}
    & & -q^\transp y \\
    & \text{subject to}
    & & By \le b(x, z)
    \end{aligned}
    \end{cases}
\end{align}
and
\begin{align}\label{kdsgckg}
    g^+(x,z) =
    \begin{cases}
    \begin{aligned}[t]
    & \underset{y \in \reals^n}{\text{maximize}}
    & & -q^\transp y \\
    & \text{subject to}
    & & By \le b(x, z).
    \end{aligned}
    \end{cases}
\end{align}
Then the formula $p^\transp x - g^-(x,z) \le r$ is an optimal $Y$-antecedent of $\phi$ in the context $\Gamma$ and $p^\transp x - g^+(x,z) \le r$ is an optimal $Y$-consequent of $\phi$ in the context $\Gamma$.
\end{corollary}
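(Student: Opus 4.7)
The plan is to invoke Proposition~\ref{kgbjsfgbj} after writing $\phi$ in the form $\phi(x, g(y))$ with a suitable monotonic function $g$, and then to recognize the resulting optimization problem as the linear program stated in the corollary.

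First, I would set $g(y) \coloneqq -q^\transp y$, so that $\phi$ becomes $p^\transp x - g(y) + r \le 0$. The predicate $p^\transp x - t + r \le 0$ is monotone in $t$ (its truth set is upward-closed in $t$), which is the monotonicity hypothesis required by Proposition~\ref{kgbjsfgbj}. That proposition therefore applies and yields $\phi(x, g^-(x,z))$ and $\phi(x, g^+(x,z))$ as optimal $Y$-antecedent and $Y$-consequent, where $g^-(x,z)$ and $g^+(x,z)$ are the minimum and maximum values of $g$ over those $y$ that witness $\Gamma$ at $(x,z)$.

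Next, I would rewrite these abstract minimizations as the LPs of the corollary. Since $\Gamma$ is the conjunction of the inequalities $Ax + By + Cz + K \le 0$, evaluating $\Gamma$ at $(a,b,c)$ yields $Bb \le b(a,c)$ with $b(a,c) \coloneqq -K - Aa - Cc$, matching the feasibility set of \eqref{lxkasjf} and~\eqref{kdsgckg}. Since the objective $g(y) = -q^\transp y$ coincides with the objective of both LPs, the $g^\pm$ of Proposition~\ref{kgbjsfgbj} are exactly those of \eqref{lxkasjf} and~\eqref{kdsgckg}. Substituting back into $\phi(x, g^\pm(x,z))$ and isolating $r$ on the right-hand side of the inequality gives the two formulas claimed.

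The main subtlety, which I expect to be the only real obstacle, is handling the degenerate cases of the LPs. If for some $(x,z)$ the feasibility set $\{y : By \le b(x,z)\}$ is empty, then $\Gamma$ admits no witness at $(x,z)$, so the antecedent is vacuously true and the consequent false; this is exactly the first branch in the proof of Proposition~\ref{kgbjsfgbj}. If $g^-$ or $g^+$ is unbounded, one may interpret it in $\ereals$, in which case the resulting linear formula degenerates to $\T$ or the empty constraint as appropriate. Outside of these edge cases, the corollary is a direct specialization of Proposition~\ref{kgbjsfgbj}.
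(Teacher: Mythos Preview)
Your proposal is correct and follows exactly the route the paper intends: the corollary is stated without proof, as an immediate specialization of Proposition~\ref{kgbjsfgbj}, and your choice $g(y)=-q^\transp y$ together with the identification of the feasible region $\{y: By\le b(x,z)\}$ is precisely that specialization. Your handling of the degenerate (infeasible or unbounded) cases is a reasonable elaboration of what the paper leaves implicit.
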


\begin{example}
Suppose we wish to eliminate variables $y_1$ and $y_2$ from $2 x + y_1 - 2y_2 \le 5$ through antecedent computation, using the context $\{x - 2 y_1 + y_2 + z \le 1, 3 y_1 - 4 y_2 \le 6\}$. We compute
\begin{align*}
    g^-(x,z) &=
    \begin{cases}
    \begin{aligned}[t]
    & \underset{y_1, y_2 \in \reals}{\text{minimize}}
    & & -(y_1 - 2y_2) \\
    & \text{subject to}
    & & x - 2 y_1 + y_2 + z \le 1 \\
    & & & 3 y_1 - 4 y_2 \le 6
    \end{aligned}
    \end{cases} \\ &
    =
    -\left(4 - \frac{2}{5} (x + z)\right).
\end{align*}
The antecedent formula is $2x + 4 - \frac{2}{5}(x + z) \le 5$, which becomes $8x - 2z \le 5$.
\qeda
\end{example}

\begin{example}
Suppose we wish to eliminate variables $y_1$ and $y_2$ from $x + 5 y_1 - 2y_2 \le 5$ by the computation of a consequent, using the context $\{x - 2 y_1 + y_2 + z \le 1, 3 y_1 - 4 y_2 \le 6\}$. We compute
\begin{align*}
    g^+(x,z) &=
    \begin{cases}
    \begin{aligned}[t]
    & \underset{y_1, y_2 \in \reals}{\text{maximize}}
    & & -(5 y_1 - 2y_2) \\
    & \text{subject to}
    & & x - 2 y_1 + y_2 + z \le 1 \\
    & & & 3 y_1 - 4 y_2 \le 6
    \end{aligned}
    \end{cases} \\ &
    =\;\;
    -\left(-4 + \frac{14}{5}(x + z)\right).
\end{align*}
The consequent is $x -4 + \frac{14}{5}(x + z) \le 5$, or $19x + 14z \le 45$.
\qeda
\end{example}

\subsection{Solving the symbolic optimization problems}

Corollary~\ref{nlkflkg} provides an explicit expression to compute optimal $Y$-antecedents and consequents. 
The next issue we face is the computation of \eqref{lxkasjf} and \eqref{kdsgckg}. Both are linear programs, but their solutions are symbolic due to the presence of $b(x, z)$.
We observe that if we have a context $\Gamma'$ such that $\Gamma = \Gamma' \land \Gamma''$, and if $\phi'$ is a $Y$-antecedent/consequent of $\phi$ in the context $\Gamma'$ then $\phi'$ is also a $Y$-antecedent/consequent in the context $\Gamma$.
First, we will discuss conditions required for solving linear programs with symbolic constraints when the context has as many constraints as optimization variables (i.e., when $N = n$).
Then we will discuss approaches for selecting from $\Gamma$ a set of formulas that meets these requirements.
We consider two selection criteria: 
a method based on positive solutions to linear equations and a method based on linear programming.

\subsubsection{Optimization in a subset of the context}
A linear program achieves its optimal value on the boundary of its constraints. If the context $\Gamma$ contains $N$ constraints and is a bounded polyhedron, then the optimal value of the linear program will occur at one of the $\binom{N}{n}$ possible vertices.
We will look for ways to choose $n$ constraints from $\Gamma$ such that the optimization problems achieve optimal values at the vertex determined by those $n$ constraints. First, we focus on solving symbolic LPs when the context contains $n$ constraints. The following definition will be useful:

\begin{definition}
Let $M \in \reals^{n \times n}$ and $\nu \in \reals^{n}$. We say that $(M, \nu)$ is a refining pair if $M$ is invertible and $(M^\transp)^{-1} \nu$ has nonnegative entries.
We say that the pair $(M, \nu)$ is a relaxing pair if $M$ is invertible and $-(M^\transp)^{-1} \nu$ has nonnegative entries.
\end{definition}

As the next result shows, these conditions are sufficient to solve the problems \eqref{lxkasjf} and \eqref{kdsgckg} when there are as many context formulas as irrelevant variables (i.e., when $N = n$). Suppose $J \subseteq \{1, \ldots, N\}$ has cardinality $n$. We let $B_J = (\beta_{J_i,j})_{i,j = 1}^n$ and $b_J = (b_{J_i})_{i = 1}^n$ be the $J$-indexed rows of $B$ and $b$, respectively.

\begin{lemma}\label{lnqfxq}
Suppose $(B_J, q)$ is a refining pair. Then
\[\begin{cases}
\begin{aligned}[t]
& \underset{y \in \reals^n}{\text{maximize}}
& & q^\transp y \\
& \text{subject to}
& & B_Jy \le b_J(x, z)
\end{aligned}
\end{cases} = q^\transp B_J^{-1} b(x,z).\]
Suppose $(B_J, q)$ is a relaxing pair. Then
\[\begin{cases}
\begin{aligned}[t]
& \underset{y \in \reals^n}{\text{minimize}}
& & q^\transp y \\
& \text{subject to}
& & B_Jy \le b_J(x, z)
\end{aligned}
\end{cases} = q^\transp B_J^{-1} b_J(x,z).\]
\end{lemma}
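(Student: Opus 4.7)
The plan is to solve both linear programs by a dual-certificate argument: the refining/relaxing condition gives exactly the nonnegative multiplier that certifies optimality at the candidate $y^\ast = B_J^{-1} b_J(x,z)$, which is well-defined since $B_J$ is invertible.

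First I would handle the maximization case with $(B_J, q)$ a refining pair. Set $\lambda^\ast \coloneqq (B_J^\transp)^{-1} q$, which is nonnegative by hypothesis and satisfies $B_J^\transp \lambda^\ast = q$. For any feasible $y$, left-multiplying $B_J y \le b_J(x,z)$ by $\lambda^{\ast\transp} \ge 0$ preserves direction, so
\[
q^\transp y \;=\; (B_J^\transp \lambda^\ast)^\transp y \;=\; \lambda^{\ast\transp} B_J y \;\le\; \lambda^{\ast\transp} b_J(x,z) \;=\; q^\transp B_J^{-1} b_J(x,z).
\]
Thus $q^\transp B_J^{-1} b_J(x,z)$ upper-bounds the objective on every feasible point. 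Conversely, $y^\ast = B_J^{-1} b_J(x,z)$ satisfies $B_J y^\ast = b_J(x,z)$, so it is feasible and attains this upper bound with equality. Hence the maximum is achieved and equals $q^\transp B_J^{-1} b_J(x,z)$, as claimed.

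For the minimization case with $(B_J,q)$ a relaxing pair I would repeat the argument with a sign flip. Now $\lambda^\ast \coloneqq -(B_J^\transp)^{-1} q \ge 0$ and $B_J^\transp \lambda^\ast = -q$. For any feasible $y$,
\[
-q^\transp y \;=\; \lambda^{\ast\transp} B_J y \;\le\; \lambda^{\ast\transp} b_J(x,z) \;=\; -q^\transp B_J^{-1} b_J(x,z),
\]
giving $q^\transp y \ge q^\transp B_J^{-1} b_J(x,z)$ everywhere on the feasible set. The same $y^\ast = B_J^{-1} b_J(x,z)$ is feasible and attains this lower bound with equality, so it is optimal and the minimum equals $q^\transp B_J^{-1} b_J(x,z)$.

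There is no real obstacle here; the only subtlety worth flagging is that the LPs must be shown to be neither infeasible nor unbounded before one can write $=$ for the optimal value. Feasibility is witnessed directly by $y^\ast$, and boundedness in the relevant direction is certified by the nonnegative multiplier $\lambda^\ast$ produced from the refining (resp.\ relaxing) pair. Complementary slackness holds automatically because $B_J y^\ast = b_J(x,z)$ makes every primal inequality tight, which is what lets the square-system case sidestep the combinatorial vertex search that would otherwise be required.
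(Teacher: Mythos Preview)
Your proof is correct and follows essentially the same approach as the paper's: both use the dual multiplier $\lambda^\ast=(B_J^\transp)^{-1}q$ (or its negation) as an optimality certificate. The paper packages this as writing the primal/dual pair and invoking LP strong duality from Boyd--Vandenberghe, whereas you carry out the weak-duality inequality and the primal witness $y^\ast=B_J^{-1}b_J(x,z)$ by hand; this makes your argument self-contained and in fact slightly cleaner, since you explicitly exhibit a feasible primal point rather than relying on the strong-duality theorem to guarantee one exists.
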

\begin{proof}
Let $(B_J, q)$ be a refining pair. We consider the first problem and its Lagrange dual (see \cite{boyd2004convex}, Section 5.2.1):
\begin{align*}
\text{primal}&
\begin{cases}
\begin{aligned}[t]
    & \underset{y}{\text{minimize}}
    & & - q^\transp y \\
    & \text{subject to}
    & &  B_J y \le b_J(x,z)
\end{aligned}
\end{cases}
\\
\text{dual}&
\begin{cases}
\begin{aligned}[t]
    & \underset{\lambda}{\text{maximize}}
    & & -b_J^\transp \lambda \\
    & \text{subject to}
    & &  B_J^\transp \lambda - q= 0 \\
    & & & \lambda \ge 0
\end{aligned}
\end{cases}
\end{align*}
The dual problem only admits the solution $\lambda^\star = ( B_J^\transp)^{-1} q$ if $\lambda^\star \ge 0$, which is the case, as $(B_J, q)$ is a refining pair. Thus, the optimal value of the dual problem is $v^\star = -  q^\transp (  B_J^{-1} b_J) $. As strong duality holds for any linear program (see \cite{boyd2004convex}, Section 5.2.4), $v^\star$ is also the optimal value of the primal problem. The statement of the theorem follows.

Now suppose $(B_J, q)$ is a relaxing pair. We consider the second problem and its dual:
\begin{align*}
\text{primal}&
\begin{cases}
\begin{aligned}[t]
    & \underset{y}{\text{minimize}}
    & & q^\transp y \\
    & \text{subject to}
    & &  B_J y \le b_J(x,z)
\end{aligned}
\end{cases}
\\
\text{dual}&
\begin{cases}
\begin{aligned}[t]
    & \underset{\lambda}{\text{maximize}}
    & & -b_J^\transp \lambda \\
    & \text{subject to}
    & &  B_J^\transp \lambda + q= 0 \\
    & & & \lambda \ge 0
\end{aligned}
\end{cases}
\end{align*}
The dual only admits the solution $\lambda^\star = -(B_J^\transp)^{-1} q$ if $\lambda^\star \ge 0$, which is the case because $(B_J, q)$ is a relaxing pair. The optimal value of the dual problem is $v^\star =  q^\transp ( B_J^{-1} b)$. Due to strong duality, $v^\star$ is also the optimal value of the primal problem.
\end{proof}

As a consequence of Corollary \ref{nlkflkg} and Lemma \ref{lnqfxq}, we obtain the following result:
\begin{corollary}\label{lnhxql}
With all definitions as above, 
if $(B_J, q)$ is a refining pair, then
$p^\transp x + q^\transp B_J^{-1} b_J(x,z) \le r$ is a $Y$-antecedent of $p^\transp x + q^\transp y \le r$ in the context $B y \le b(x,z)$.
If $(B_J, q)$ is a relaxing pair, then
$p^\transp x + q^\transp B_J^{-1} b_J(x,z) \le r$ is a $Y$-consequent of $p^\transp x + q^\transp y \le r$ in the context $B y \le b(x,z)$.
\end{corollary}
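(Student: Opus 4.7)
The plan is to combine Lemma~\ref{lnqfxq} with two earlier facts: the closed form for optimal antecedents/consequents from Corollary~\ref{nlkflkg}, and the observation stated just before the subsection that a $Y$-antecedent (resp.\ $Y$-consequent) in a context $\Gamma'$ remains a $Y$-antecedent (resp.\ $Y$-consequent) in any strengthening $\Gamma = \Gamma' \land \Gamma''$. The bridge is to solve the symbolic LPs \eqref{lxkasjf} and \eqref{kdsgckg} not against the full context $By \le b(x,z)$ but against the sub-context $B_J y \le b_J(x,z)$ obtained by selecting the $n$ rows indexed by $J$, and then to transfer the resulting formula back to the full context.

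First I would handle the refining-pair case. Restricting the constraint set of \eqref{lxkasjf} to $B_J y \le b_J(x,z)$, the minimization of $-q^\transp y$ is equivalent to the maximization of $q^\transp y$ over that polyhedron, so Lemma~\ref{lnqfxq} applies with the refining hypothesis on $(B_J, q)$ and yields the symbolic optimum $q^\transp B_J^{-1} b_J(x,z)$. Hence the $g^-$ of Corollary~\ref{nlkflkg}, computed in the sub-context, equals $-q^\transp B_J^{-1} b_J(x,z)$. Substituting into the antecedent template $p^\transp x - g^-(x,z) \le r$ produces exactly $p^\transp x + q^\transp B_J^{-1} b_J(x,z) \le r$, which Corollary~\ref{nlkflkg} certifies as a $Y$-antecedent in the sub-context. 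Since $By \le b(x,z)$ entails $B_J y \le b_J(x,z)$, the monotonicity observation immediately upgrades this to a $Y$-antecedent in the full context.

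The relaxing-pair case is symmetric: restricting \eqref{kdsgckg} to $B_J y \le b_J(x,z)$ turns the maximization of $-q^\transp y$ into the minimization of $q^\transp y$, and the relaxing hypothesis on $(B_J,q)$ triggers the second half of Lemma~\ref{lnqfxq}, giving $g^+(x,z) = -q^\transp B_J^{-1} b_J(x,z)$ on the sub-context. The consequent template $p^\transp x - g^+(x,z) \le r$ of Corollary~\ref{nlkflkg} then becomes the desired formula, and the same monotonicity step lifts it to the full context.

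No step is genuinely delicate; the only place to exercise care is the sign bookkeeping that aligns the $\min$/$\max$ of $-q^\transp y$ appearing in Corollary~\ref{nlkflkg} with the $\max$/$\min$ of $q^\transp y$ required by Lemma~\ref{lnqfxq}, and the verification that ``refining pair'' is precisely the dual-feasibility condition needed in the antecedent direction while ``relaxing pair'' is the corresponding condition in the consequent direction. Once these signs are matched, the corollary follows from the quoted results with no further computation.
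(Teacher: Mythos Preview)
Your proposal is correct and matches the paper's approach exactly: the paper presents the corollary as an immediate consequence of Corollary~\ref{nlkflkg} and Lemma~\ref{lnqfxq}, together with the monotonicity observation stated just before the subsection, and your argument spells out precisely that combination with the sign bookkeeping made explicit.
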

Corollary \ref{lnhxql} gives explicit formulas for computing $Y$-antecedents/consequents of a formula in a context. This result is missing methods for computing $J$, the set of the indices of formulas in $\Gamma$, in such a way that it yields refining or relaxing pairs $(B_J, q)$, as needed. We consider two methods to identify $J$.

\subsubsection{Computing J by seeking positive solutions to linear equations}
Our first method is based on identifying constraints yielding linear systems of equations whose solutions are guaranteed to be nonnegative. We will use the following result.

\begin{theorem}[Kaykobad \cite{KAYKOBAD1985133}]\label{lnhdnlqwh}
Let $M = (\mu_{ij}) \in \reals^{n \times n}$ and $\nu \in \reals^n$. Suppose the entries of $M$ are nonnegative, its diagonal entries are positive, the entries of $\nu$ are positive, and $\nu_i > \sum_{\substack{j \ne i}} \mu_{ij} \frac{\nu_j}{\mu_{jj}}$ for all $i \le n$. Then $M$ is invertible and $M^{-1} \nu$ has positive entries.
\end{theorem}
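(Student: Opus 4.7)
My plan is to factor $M$ so as to expose a contractive nonnegative linear operator, then read $M^{-1}\nu$ off a signed Neumann series whose even/odd pairing delivers the strict positivity. Concretely, write $D = \diag(\mu_{11},\ldots,\mu_{nn})$ and $E = M - D$, so $E$ is entrywise nonnegative with zero diagonal. Let $u := D^{-1}\nu$, which is strictly positive because $\nu > 0$ and each $\mu_{ii} > 0$, and set $P := D^{-1} E \ge 0$. Dividing the hypothesis $\nu_i > \sum_{j \ne i} \mu_{ij}\nu_j/\mu_{jj}$ through by $\mu_{ii}$ rewrites it as $u_i > (Pu)_i$ for every $i$, i.e., $(I-P)u > 0$ entrywise.

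Next I would establish invertibility. Since $M = D(I+P)$ with $D$ invertible, it suffices to invert $I+P$. Because $P \ge 0$ and $Pu < u$ for the strictly positive $u$, the Collatz--Wielandt characterization of the Perron root gives the spectral radius bound $\rho(P) \le \max_i (Pu)_i/u_i < 1$, so $-1$ is not an eigenvalue of $P$ and $I+P$ is invertible. Equivalently, with $U = \diag(u_1,\ldots,u_n)$, the hypothesis is exactly that $MU$ is strictly row-diagonally dominant, so Levy--Desplanques applies.

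For positivity, write $x := M^{-1}\nu$. The factorization $M = D(I+P)$ together with $\nu = Du$ gives $(I+P)x = u$, hence $x = (I+P)^{-1}u$. Because $\rho(P) < 1$, the Neumann series converges absolutely, and pairing consecutive terms yields
\[
    x = \sum_{k=0}^{\infty} (-P)^{k} u = \sum_{k=0}^{\infty} P^{2k}(I-P)u.
\]
Every even power $P^{2k}$ is entrywise nonnegative, $(I-P)u$ is entrywise strictly positive by the setup, and so each summand is entrywise nonnegative; moreover the $k=0$ summand by itself equals $(I-P)u > 0$, so $x > 0$ entrywise, which is the claim.

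The invertibility step is a routine reweighting of Levy--Desplanques once the diagonal scaling by $u$ is spotted. The real obstacle is the \emph{strict} positivity of $x$: naive arguments based on $Mx = \nu \ge 0$ or on limits of nonnegative Jacobi iterates only deliver $x \ge 0$. The pairing trick above, equivalently the factorization $(I+P)^{-1} = (I-P^2)^{-1}(I-P)$, is what converts the strict entrywise inequality $(I-P)u > 0$ into a strict entrywise inequality for $x$.
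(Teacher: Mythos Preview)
The paper does not prove this statement: Theorem~\ref{lnhdnlqwh} is quoted from Kaykobad~\cite{KAYKOBAD1985133} and used as a black box, so there is no in-paper argument to compare against. That said, your proof is correct and self-contained. The decomposition $M = D(I+P)$ with $P = D^{-1}E \ge 0$ and the rewriting of the hypothesis as $(I-P)u > 0$ for $u = D^{-1}\nu > 0$ are exactly right; the spectral-radius bound $\rho(P) < 1$ follows cleanly from either the diagonal similarity $U^{-1}PU$ having $\infty$-norm $\max_i (Pu)_i/u_i < 1$ or, equivalently, from Levy--Desplanques applied to $MU$. The positivity step via the identity $(I+P)^{-1} = (I-P^2)^{-1}(I-P) = \sum_{k\ge 0} P^{2k}(I-P)$ is the crux, and your observation that the $k=0$ summand already contributes a strictly positive vector while all higher summands are entrywise nonnegative delivers $x = M^{-1}\nu > 0$ without any irreducibility assumption. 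No gaps.
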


\begin{definition}
A pair $(M, \nu)$, where $M = (\mu_{ij}) \in \reals^{n \times n}$ and $\nu \in \reals^n$, satisfying the conditions of Theorem \ref{lnhdnlqwh} is called a \emph{Kaykobad pair}.
\end{definition}
We have the following result.
\begin{lemma}\label{knjxks}
Let $Q$ be an $n \times n$ diagonal matrix whose $i$-th diagonal entry is $\sign(q_i)$. Let $\bar B_J = B_J Q$ and $\bar q = Q q$.
If $(\bar B_J^\transp, \bar q)$ is a Kaykobad pair, then $(B_J, q)$ is a refining pair.
If $(-\bar B_J^\transp, \bar q)$ is a Kaykobad pair, then $(B_J, q)$ is a relaxing pair.
\end{lemma}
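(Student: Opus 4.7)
The plan is to reduce both claims to Kaykobad's theorem by exploiting the fact that $Q$ is a diagonal sign matrix, so (assuming $q_i \ne 0$ for the relevant indices, which is the non-degenerate case) we have $Q^2 = I$ and $Q^\transp = Q = Q^{-1}$. The key identity to derive first is a clean relation between $(B_J^\transp)^{-1} q$ and $(\bar B_J^\transp)^{-1} \bar q$: from $\bar B_J = B_J Q$ we get $B_J = \bar B_J Q$, hence $B_J^\transp = Q \bar B_J^\transp$, and from $\bar q = Q q$ we get $q = Q \bar q$. Inverting gives
\[
(B_J^\transp)^{-1} q = (\bar B_J^\transp)^{-1} Q^{-1} \cdot Q \bar q = (\bar B_J^\transp)^{-1} \bar q.
\]

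For the refining case, I would then invoke Theorem \ref{lnhdnlqwh} on the Kaykobad pair $(\bar B_J^\transp, \bar q)$: its conclusion is that $\bar B_J^\transp$ is invertible and $(\bar B_J^\transp)^{-1} \bar q$ has (strictly) positive entries. Invertibility of $\bar B_J^\transp$ transfers to $B_J$ because $Q$ is invertible, and the identity above shows that $(B_J^\transp)^{-1} q \ge 0$. By the definition of refining pair, this yields the first claim.

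For the relaxing case, the same manipulation gives
\[
-(B_J^\transp)^{-1} q = -(\bar B_J^\transp)^{-1} \bar q = (-\bar B_J^\transp)^{-1} \bar q,
\]
so applying Kaykobad's theorem to the Kaykobad pair $(-\bar B_J^\transp, \bar q)$ shows simultaneously that $-\bar B_J^\transp$ (and hence $B_J$) is invertible and that $-(B_J^\transp)^{-1} q \ge 0$, which is the definition of a relaxing pair.

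The only subtle point, and the one I expect to need a brief aside about, is the edge case where some $q_i = 0$: then $\sign(q_i) = 0$ makes $Q$ singular and the rewriting above breaks. However, a coordinate with $q_i = 0$ contributes nothing to the objective $q^\transp y$, so it can be dropped from the analysis without loss of generality; the lemma should be read as applying to the reduced system in which all entries of $q$ are nonzero. Beyond this mild caveat, the proof is a direct substitution followed by one invocation of Theorem \ref{lnhdnlqwh} for each half.
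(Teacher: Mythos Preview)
Your proof is correct and mirrors the paper's own argument almost line for line: both derive the identity $(B_J^\transp)^{-1} q = (\bar B_J^\transp)^{-1} \bar q$ from $Q = Q^\transp = Q^{-1}$ and then invoke Theorem~\ref{lnhdnlqwh}. Your aside about $q_i = 0$ is unnecessary, since the hypothesis that $(\bar B_J^\transp, \bar q)$ is a Kaykobad pair already requires $\bar q$ to have strictly positive entries, which forces every $q_i \ne 0$ and hence $Q$ invertible.
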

\begin{proof}
Suppose $(\bar B_J^\transp, \bar q)$ is a Kaykobad pair. Then $\bar B_J^\transp$ is invertible. We have $B_J (\bar B_J Q)^{-1} = B_J Q (\bar B_J )^{-1} = I$ and $(\bar B_J Q)^{-1} B_J = Q (\bar B_J)^{-1} (B_J Q) Q = I$, so $B_J$ is invertible. Moreover, we have
\[
0 < (\bar B_J^\transp)^{-1} \bar q = (Q B_J^\transp)^{-1} (Q q) = (B_J^\transp)^{-1} q,
\]
which means that $(B_J, q)$ is a refining pair.

If $(-\bar B_J^\transp, \bar q)$ is a Kaykobad pair, then $\bar B_J^\transp$ is invertible, which means that so is $B_J$. Moreover,
\[
0 < -(\bar B_J^\transp)^{-1} \bar q = -(Q B_J^\transp)^{-1} (Q q) = -(B_J^\transp)^{-1} q.
\]
Thus, $(B_J, q)$ is a relaxing pair.
\end{proof}

\algnewcommand{\LeftComment}[1]{\(\triangleright\) #1}

\begin{algorithm*}
\caption{Antecedents and consequents for linear inequality constraints by identifying systems of equations with positive solutions}\label{lnuqnw}
\hspace*{\algorithmicindent} \textbf{Input:} Term to transform $p^\transp x + q^\transp y \le r$, context $\Gamma$, \\
\hspace*{\algorithmicindent} \quad \quad \quad \;\;transform instruction $s$ (\textbf{true} for antecedents and \textbf{false} for consequents)\\
\hspace*{\algorithmicindent} \textbf{Output:} Transformed term $t'$ lacking any $y$ variables
\begin{algorithmic}[1]
\State MatrixRowTerms $\gets \emptyset$ \Comment{Rows of the context matrix $A$}
\State PartialSums $\gets \textproc{zeros}(\length(y))$
\State TCoeff $\gets -1$
\If{s}
\State TCoeff $\gets 1$
\EndIf
\For{$i = 1$ to $i = \length(y)$} \Comment{One iteration per row of context matrix}\label{hjqxg}
\State IthRowFound $\gets$ \textbf{false} \Comment{Indicate whether we could add the $i$-th row}
\For{$\gamma \in \Gamma \setminus \text{MatrixRowTerms}$}
\Statex \quad \quad \quad \LeftComment{1. Verifying Kaykobad pair: sign of nonzero matrix terms}
\State TermIsInvalid $\gets$ \textbf{false}
\For{$j = 1$ to $j = \length(y)$}
\If{$\coeff(\gamma, y_j) \ne 0$ and $\sign(\coeff(\gamma, y_j)) \ne \sign(q_j) \cdot \text{TCoeff}$}
\State TermIsInvalid $\gets$ \textbf{true}
\State \textbf{break}
\EndIf
\EndFor
\Statex \quad \quad \quad \LeftComment{2. Verifying Kaykobad pair: matrix diagonal terms}
\If{$\coeff(\gamma, y_i) = 0$ or TermIsInvalid}
\State \textbf{next}
\EndIf
\Statex \quad \quad \quad \LeftComment{3. Verifying Kaykobad pair: relationship between matrix and vector entries}
\State Residuals $\gets \text{zeros}(\length(y))$
\For{$j = 1$ to $j = \length(y)$}
\If{$j \ne i$}
\State Residuals$[j] \gets \sign(q_j) \cdot \text{TCoeff} \cdot \coeff(\gamma, y_j) \cdot \frac{q_i}{\coeff(\gamma, y_i)}$
\EndIf
\If{ $|q_j| \cdot \text{TCoeff} \le \text{PartialSums}[j] + \text{Residuals}[j]$}
\State TermIsInvalid $\gets$ \textbf{true}
\State \textbf{break}
\EndIf
\EndFor
\If{\textbf{not} TermIsInvalid}
\Statex \quad \quad \quad \quad \; \LeftComment{Resulting matrix is meeting Kaykobad pair conditions at $i$-th row}
\State IthRowFound $\gets$ \textbf{true}
\For{$j = 1$ to $j = \length(y)$}
\State $\text{PartialSums}[j] \gets \text{PartialSums}[j] + \text{Residuals}[j]$
\EndFor
\State MatrixRowTerms.append($\gamma$)
\State \textbf{break}
\EndIf
\EndFor
\If{\textbf{not} IthRowFound}
\State \Return Error: Cannot transform term
\EndIf
\EndFor\label{kdcja}
\State $B \gets \textproc{MatrixFromTerms}(\text{MatrixRowTerms}, y)$
\State $b \gets \textproc{VectorFromTerms}(\text{MatrixRowTerms}, y)$
\State \Return $p^\transp x + q^\transp B^{-1} b \le r$
\end{algorithmic}
\end{algorithm*}

Corollary \ref{lnhxql} and Lemma \ref{knjxks} yield a method for computing $Y$-antecedents and consequents of formulas in a context. To use it, we must construct $J$ such that $(B_J, q)$ meets the corollary's conditions. We construct $J$ by choosing $n$ formulas from the context $\Gamma$; these formulas must meet the conditions of a Kaykobad pair. One advantage of the Kaykobad condition is that it allows us to incrementally identify suitable constraints to add to the context $\Gamma'$, i.e., we don't have to select $n$ constraints before we run the verification. That is, when we have identified $k < n$ constraints, we can easily verify whether a candidate $(k+1)$-th formula would be acceptable for constructing a Kaykobad pair.
Algorithm \ref{lnuqnw} computes $Y$-antecedents and consequents for linear inequality constraints based on Corollary \ref{lnhxql}. Lines \ref{hjqxg}--\ref{kdcja} search the context $\Gamma$ for $n$ constraints meeting the Kaykobad conditions. The rest of the algorithm computes the $Y$-antecedents/consequents. 
If there are $n$ variables to be eliminated, and $N = |\Gamma|$ constraints in the context $\Gamma$, the algorithm has complexity $O(n^2 N + N^3)$. The function $\coeff(\gamma, y_j)$ extracts the coefficient of the variable $y_j$ from the term $\gamma$. The call $\textproc{MatrixFromTerms}(\text{MatrixRowTerms}, y)$ extracts all coefficients of the $y$ variables contained in MatrixRowTerms and makes these coefficients the rows of the resulting matrix.
The call $\textproc{VectorFromTerms}(\text{MatrixRowTerms}, y)$ returns a vector of all expressions contained in MatrixRowTerms with their $y$ variables removed. These are the elements of $b(x,z)$. Finally, $\diag(v)$ returns a diagonal matrix whose entries are the vector $v$.

We implemented Algorithm~\ref{lnuqnw} in Python and generated benchmarks for it.
We considered formulas $\phi$ of the form \eqref{jbhgbkgh} and contexts of the form \eqref{jvbhgkvgb}. We varied the number $N$ of constraints in the context, the number $m$ of variables to be eliminated, and the total number of variables present in the problem $n + m + o$ (in our experiments $\phi$ and $\Gamma$ had the same variables, so we had $o = 0$). We randomly generated coefficients for all variables in $\phi$ and $\Gamma$, which means that $\Gamma$ was always a dense matrix, a scenario unlikely to occur in applications.
We executed the algorithm for the situations in which we wanted to eliminate 2 irrelevant variables and 4 irrelevant variables. The results of the experiments are shown in Table~\ref{hqgjjncdc}.

\newcommand\headercell[1]{%
   \smash[b]{\begin{tabular}[t]{@{}c@{}} #1 \end{tabular}}}

\begin{table}[t]
\centering
\caption{Execution time in seconds of Algorithm~\ref{lnuqnw} for a given total number of variables, number of constraints in the context $\Gamma$, and number of irrelevant variables (or variables that need to be eliminated)}
\label{hqgjjncdc}
\begin{tabular}[t]{ccrrrrrr}
\toprule
        & \headercell{Constraints \\ in context} &    \multicolumn{6}{c}{Total number of variables} \\
        \cmidrule{3-8}
        & &    5 &   10 &   15 &   20 &   25 &   30 \\
\midrule
    &5 & 0.24 & 0.41 & 0.64 & 0.88 & 1.11 & 1.43 \\
2    &10 & 0.23 & 0.42 & 0.63 & 0.89 & 1.13 & 1.45 \\
irrelevant    &20 & 0.23 & 0.45 & 0.67 & 0.93 & 1.18 & 1.49 \\
variables    &100 & 0.39 & 0.7  & 1.06 & 1.44 & 1.88 & 2.47 \\
    &300 & 1.65 & 2.53 & 3.68 & 5.21 & 7.21 & 9.73 \\
\midrule
&   5 & 0.43 & 0.86 &   NA   & 1.74 &   NA   &   NA   \\
4&  10 & 0.45 & 0.86 &   1.31 & 1.8  &   2.21 &   2.65 \\
irrelevant&  20 & 0.46 & 0.88 &   1.34 & 1.88 &   2.29 &   2.7  \\
variables& 100 & 0.67 & 1.16 &   1.73 & 2.44 &   2.95 &   3.76 \\
& 300 & 2.05 & 3.1  &   4.5  & 6.36 &   8.23 &  10.95 \\
\bottomrule
\end{tabular}
\end{table}

\subsubsection{Computing J via linear programming}
Now we will build $J$ by numerically solving \eqref{lxkasjf} and \eqref{kdsgckg} for fixed values of $x$ and $z$.

\begin{lemma}
Let $a \in \reals^m$ and $c \in \reals^o$.
\begin{itemize}
\item Suppose $g^-(a, c)$ is finite and the optimum of the LP \eqref{lxkasjf} (with $x = a$ and $z = c$) is attained at $y^\star$. Let $J = \setArg{i}{b_i(a,c) - \sum_{j=1}^n \beta_{ij} y_j^\star = 0}$ and assume that $|J| = n$, where $n$ is the number of optimization variables $y$ in $g^-$.
If $B_J$ is invertible, then $(B_J, q)$ is a refining pair.
\item Similarly, suppose $g^+(a, c)$ is finite and the optimum of the LP \eqref{kdsgckg} (with $x = a$ and $z = c$) is attained at $y^\star$. Let $J = \setArg{i}{b_i(a,c) - \sum_{j=1}^n \beta_{ij} y_j^\star = 0}$ and assume that $|J| = n$.
If $B_J$ is invertible, then $(B_J, q)$ is a relaxing pair.
\end{itemize}
\end{lemma}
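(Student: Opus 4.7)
The plan is to extract the refining/relaxing-pair property directly from the KKT conditions of the linear programs \eqref{lxkasjf} and \eqref{kdsgckg}, using exactly the same duality machinery that powers Lemma~\ref{lnqfxq}. Strong duality holds for every feasible LP, so once the primal has a finite optimum $y^\star$, the dual admits an optimum $\lambda^\star \ge 0$ satisfying stationarity and complementary slackness with respect to $y^\star$.

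For the first bullet, I would dualize \eqref{lxkasjf}, viewed as $\min_y -q^\transp y$ subject to $By \le b(a,c)$. The resulting dual stationarity condition is $B^\transp \lambda^\star = q$. Complementary slackness forces $\lambda_i^\star = 0$ whenever the $i$-th constraint is inactive at $y^\star$, i.e., whenever $i \notin J$. The stationarity equation therefore collapses to the square system $B_J^\transp \lambda_J^\star = q$, and the assumed invertibility of $B_J$ determines it uniquely: $\lambda_J^\star = (B_J^\transp)^{-1} q$. Since the surviving entries of $\lambda^\star$ retain their nonnegativity, $(B_J^\transp)^{-1} q \ge 0$, which is exactly the refining-pair condition.

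The second bullet follows from the same argument applied to the LP \eqref{kdsgckg}, equivalently $\min_y q^\transp y$ subject to $By \le b(a,c)$. The only change is that dual stationarity now reads $B^\transp \lambda^\star = -q$, so after the same complementary-slackness reduction one obtains $\lambda_J^\star = -(B_J^\transp)^{-1} q \ge 0$, which is the relaxing-pair condition.

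The main subtlety is the bookkeeping between the full constraint system $B$ and its active submatrix $B_J$: the dual variables live over all $N$ constraints, whereas the conclusion is phrased over the $n$-dimensional subvector $\lambda_J^\star$. The hypotheses $|J| = n$ and $B_J$ invertible are precisely what make the reduction well-posed, turning the otherwise underdetermined dual stationarity into a square, uniquely solvable system whose solution inherits the nonnegativity of $\lambda^\star$.
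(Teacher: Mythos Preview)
Your proposal is correct and follows essentially the same route as the paper: dualize the LP, invoke strong duality to obtain an optimal $\lambda^\star \ge 0$ with $B^\transp \lambda^\star = q$ (resp.\ $-q$), use complementary slackness to kill the inactive components $\lambda_{\hat J}^\star$, and read off $(B_J^\transp)^{-1}q \ge 0$ (resp.\ $-(B_J^\transp)^{-1}q \ge 0$) from the remaining square system. The paper's own proof is just this argument with the dual written out explicitly rather than phrased in KKT language.
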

\begin{proof}
We prove the first part. Consider the following problems:
\begin{align*}
\text{primal}&
\begin{cases}
\begin{aligned}[t]
    & \underset{y}{\text{minimize}}
    & & - q^\transp y \\
    & \text{subject to}
    & &  B y \le b(a,c)
\end{aligned}
\end{cases}
\\
\text{dual}&
\begin{cases}
\begin{aligned}[t]
    & \underset{\lambda}{\text{maximize}}
    & & -b(a,c)^\transp \lambda \\
    & \text{subject to}
    & &  B^\transp \lambda - q= 0 \\
    & & & \lambda \ge 0
\end{aligned}
\end{cases}
\end{align*}
Since $g^-(a,c)$ is finite, the primal is feasible. By strong duality, so is the dual.
Let $\lambda^\star$ be the value of $\lambda$ where the dual attains its optimum. Then $\lambda^\star \ge 0$ and $0 = B^\transp \lambda ^\star - q = B_J^\transp \lambda_J^\star + B_{\hat J}^\transp \lambda_{\hat J}^\star - q$,
where $\hat J = \{1, \ldots, N\} \setminus J$.
Due to complementary slackness, we know that $\lambda_{\hat J}^\star = 0$. Thus, $0 = B_J^\transp \lambda_J^\star - q$. By assumption, $B_J$ is invertible. Then $(B_J, q)$ is a refining pair.
The proof of the second part is similar.
\end{proof}


\begin{algorithm}[h]
\caption{Antecedents and consequents for linear inequality constraints through linear programming}\label{kjdgck}
\hspace*{\algorithmicindent} \textbf{Input:} Term to transform $p^\transp x + q^\transp y \le r$, context $\Gamma$, \\
\hspace*{\algorithmicindent} \quad \quad \quad $a \in \reals^m$, $c \in \reals^o$, transform instruction $s$
\hspace*{\algorithmicindent} \quad \quad \quad (\textbf{true} for antecedents and \textbf{false} for consequents)\\
\hspace*{\algorithmicindent} \textbf{Output:} Transformed term $t'$ lacking any $y$ variables
\begin{algorithmic}[1]
\State $B \gets \textproc{MatrixFromTerms}(\Gamma, y)$
\State $b \gets \textproc{VectorFromTerms}(\Gamma, y)$
\State $b_e \gets \textproc{Evaluate}(b, a, c)$
\If{s}
\State $(\text{success}, y^\star) \gets \textproc{LinearProgramming}(-q, B, b_e)$
\Else
\State $(\text{success}, y^\star) \gets \textproc{LinearProgramming}(q, B, b_e)$
\EndIf
\If{\textbf{not} success}
\State \Return Error: LP 
is unfeasible
\EndIf
\State $S \gets b_e - B y^\star$
\State $J \gets \emptyset$
\For{$j = 1$ to $j = \length(b)$}
\If{$S_j = 0$}
\State $J \gets J \setunion \{j\}$
\EndIf
\EndFor
\State $(\text{success}, \hat B_J) \gets \textproc{MatrixInv}(B_J)$
\If{\textbf{not} success}
\State \Return Error: cannot invert $B_J$
\EndIf
\State \Return $p^\transp x + q^\transp \hat B_J b_J \le r$
\end{algorithmic}
\end{algorithm}

\medskip

Lemma \ref{kjdgck} allows us to obtain the solution to a linear programming problem with symbolic constraints $B y \le b(x,z)$ in a reduced context $B_J y \le b_J(x,z)$, where we identify $J$ by solving a numerical LP.
Lemma \ref{kjdgck} and Corollary \ref{lnhxql} yield a method for computing $Y$-antecedents and consequents. This method is reflected in Algorithm \ref{kjdgck}.
As before, $\textproc{MatrixFromTerms}(\Gamma, y)$ and $\textproc{VectorFromTerms}(\Gamma, \allowbreak y)$ extract from the context $\Gamma$ the matrix $B$ and symbolic vector $b(x,z)$ of the constraints $B y \le b(x,z)$. $\textproc{Evaluate}(b, a, c)$ returns the vector $b(a, c) \in \reals^N$. $\textproc{LinearProgramming}(q, \allowbreak B, b_e)$ solves the LP $\underset{y}{\text{min.}} \; q^\transp y$ subject to $B y \le b_e$ and returns a success variable and the value $y^\star$ where the minimum is attained. The success variable is true when the LP is feasible and has a finite solution. \textproc{MatrixInv} computes matrix inverses. Its success variable is false when the matrix is not invertible.

\section{Discussion and concluding remarks}


To the best of our knowledge, the identification of the problems of variable elimination via 
antecedent and consequent synthesis as relevant to the computation of specifications for requirement engineering
is new. We provided two efficient algorithms for the solutions of these problems. Both algorithms are sound but incomplete.

The synthesis problems we considered are closely related to quantifier elimination, of which there is a large body of work. In fact, the universal solutions to the synthesis problems are expressed as quantifications---see Proposition~\ref{kugbkgk}.

Quantifier-elimination algorithms for real arithmetic are given by Ferrante and Rackoff~\cite{ferrante1975decision}, 
Monniaux~\cite{monniaux2008quantifier}, Nipkow~\cite{10.1007/978-3-540-71070-7-3}, John and Chakraborty~\cite{john2016layered}, and others.
Audemard et al.~\cite{10.1007/3-540-45470-5-22} and
Bj{\o}rner~\cite{10.1007/978-3-642-14203-1-27} discuss linear quantifier elimination methods in the context of DPLL-based search.
Cousot and Halbwachs~\cite{10.1145/512760.512770} and Monniaux~\cite{monniaux2009automatic} apply it in the context of 
abstract interpretation.
The earliest means for carrying out quantifier elimination for linear inequalities is Fourier-Dines-Motzkin elimination~\cite{fourier,dines1919systems,motzkin1936beitrage}, to which many improvements have been made---see \cite{DANTZIG1973288,Duffin1974,lassez1992fourier,chandru1993variable,Imbert1993FouriersEW}.
All known algorithms have at least exponential worst case complexity, but can be extremely performant on many typical problems---see \cite{monniaux2008quantifier} for details.

An important use of the computation of specifications in requirement engineering is the support of design-space exploration and tradeoff analysis. When designing complex systems, engineers may need to navigate a large design space. To do this effectively, the computation of specifications has to be supported by efficient algorithms. This motivated us to look for efficient, though incomplete, algorithms for antecedent/consequent synthesis. Moreover, in requirement engineering, we want to produce outputs which are syntactically similar to the formulas from which variables are eliminated---see the discussion in Section~\ref{sc:pord}. This aspect motivated our results of Propositions~\ref{kjbhgbjk} and~\ref{kgbjsfgbj}.

Some next steps we perceive in contextual variable elimination for requirement engineering include better algorithms for polyhedral constraint synthesis and support for temporal logic. The algorithms we proposed to synthesize $Y$-antecedents and consequents are based on Corollary \ref{nlkflkg}, which yields optimal solutions. However, the algorithms we presented for solving \eqref{lxkasjf} and \eqref{kdsgckg} have room to improve.
Per Lemma~\ref{lnqfxq}, given a set of $N$ linear equations in $n$ variables ($N > n$),
it would be very useful to research methods to efficiently identify a set of $n$ linear equations with a nonnegative solution.
Finally, to support the computation of $Y$-antecedents and consequents of formulas in a context for temporal logic specifications, we believe the decomposition enabled by Proposition~\ref{kqxnfkxr} could have a useful role.

\bibliographystyle{ieeetr}
\bibliography{support/references}



\end{document}